\let\csname equation*\endcsname\relax
\let\csname endequation*\endcsname\relax
\newtheorem{theorem}{Theorem}
\newtheorem{definition}[theorem]{Definition}
\newtheorem{example}[theorem]{Example}
\newtheorem{lemma}[theorem]{Lemma}
\newtheorem{remark}[theorem]{Remark}
\newenvironment{proof}[1][Proof]{\noindent\textbf{#1.} }{\ \rule{0.5em}{0.5em}}
\begin{document}

\title{A measure theoretic approach to linear inverse atmospheric dispersion
problems.}
\author{Niklas Br\"{a}nnstr\"{o}m and Leif \AA\ Persson}
\date{}

\begin{abstract}
Using measure theoretic arguments, we provide a general framework for
describing and studying the general linear inverse dispersion problem where no a
priori assumptions on the source function has been made. We investigate 
the source-sensor relationship
and rigorously state solvability conditions for when the inverse problem can
be solved using a least-squares optimisation method. That is, we derive
conditions for when the least-squares problem is well-defined.
\end{abstract}

\maketitle

\address{Swedish Defence Research Agency, FOI, SE-901 82 Ume\aa, Sweden} 

\ams{49N45, 86A22}

\section{Introduction}

Atmospheric dispersion models all have the goal to forecast where a
pollutant, if released into the atmosphere, ends up. There are many
applications, e.g. planning where a factory should be located (to reduce the
risk in case of an accident) which requires mainly a local model, or e.g.
forecasting which regions that would be affected by a nuclear power plant
incident (like the Fukushima disaster) which mainly requires a regional or
global model. An equally natural question to ask is: given that we have
detected a pollutant somewhere, can we deduce where the source was located?
If not before, this inverse problem became very important in the wake of the
Chernobyl accident. In that case the radioactive pollution triggered sensors
in Europe before any news of the accident was released. Pinpointing the
location of the source could be done by guessing the location, strength, and
time of the accident and running the dispersion model forward to see whether
it would give the observed measurements. Unless the guess is an educated one
this can be a costly process. The alternative is to solve the inverse
problem. Having a solution to the inverse problem, that is, an estimate of
the parameters in the source function, enables subsequent forward dispersion
modelling to gain a much better understanding of the current state of
affairs (a better situation analysis). Alternatively the source estimate may
be a crucial part of forensic work, for example trying to calculate the
amount of leaked radioactive substances following the accidents in Chernobyl 
\cite{GHL} and Fukushima \cite{StohlEtAl} or pinpointing nuclear test sites 
\cite{RingbomEtAl}.

A number of methods to solve the inverse problem have been suggested. In
addition to the two main contenders Optimisation algorithms and Bayesian
statistics there are methods like Footprint Analysis, e.g. the survey
article \cite{Schmid2002}, Influence Area \cite{Pudykiewicz1998} and \cite%
{Robertson2004}, directly inverting the problem and trying to overcome any
issues associated with ill-conditioning, see e.g. \cite{YF2010}. Often the
methods are designed to bear only on a subclass of inverse dispersion
problems by a priori conditioning on the number of sources, the type of
source, or the dispersion model employed. In the Bayesian approach to the
inverse problem the source is estimated from a so called a posteriori
probability distribution function which is obtained by calculating a
likelihood function and weighing it with any a priori information that one
has at hand (see e.g. \cite{Stuart2010} for an introduction to general
Bayesian inverse problems, and \cite{Franklin1970} for an early reference).
This method avoids the pitfalls of ill-conditioning which are often
associated with directly inverted problems and adds the benefit of allowing
uncertainties in models and measurements to be handled in a tractable
fashion. In a series of papers the Bayesian approach has been adapted to
bear on inverse dispersion problems: in \cite{KYL2007} the case with one
source with unknown position and unknown but constant source strength was
treated. \cite{YF2010} deals with the case where there is a known number of
sources in given locations but where the source strengths are unknown (there
is also an interesting comparison of the results to those obtained with a
directly inverted model where the problems of ill-conditioning have been
alleviated by singular value decomposition). This study was generalised in 
\cite{Yee2007} and \cite{Yee2012} to cover the situation where there is an
unknown number of sources in unknown positions, where the only assumption on
each source is that during emission the source strength is constant. The
case with an unknown number of sources is much harder than working with a
fixed number of sources as the dimension of the parameter space is unknown.
In \cite{Yee2007} and \cite{Yee2012} this problem was overcome by using the
method reversible jump Monte Carlo Markov Chain \cite{Green1995} to sample
from the posterior probability distribution function with an unknown number
of dimensions ( the dimension is one of the parameters that needs to be
estimated). In \cite{Yee2012B} a recursive method is proposed to deal with
same issue.

Under the umbrella of the Optimisation method we find all the various ways
of setting up the inverse dispersion problem so that its solution is given
as the solution of a least-squares fitting problem. As for the Bayesian
method the body of literature mostly covers the case where it is a priori
known that there is only one single source, see e.g. \cite{RL1998}, 
\cite{THG2007}, \cite{AYH2007}, and \cite{ISS2012}. There are exceptions, 
e.g.in \cite{SSI2012} the least-squares method presented in \cite%
{ISS2012} is generalised to cover an unknown number of point sources, and 
in \cite{Bocquet2005} the space-time has been discretised and optimal source 
term is constructed by forming a union of "box-sources" ( the smallest resolution 
is given by the grid box, so "box-source" seems the appropriate term instead of 
point source).

In this paper we are developing a framework for describing
inverse dispersion problems. The framework relies on using measure theoretic 
ideas and methods to study the general linear inverse dispersion problem without making a
priori assumptions on e.g. the number of sources, their emission patterns or
their distribution in the spatio-temporal domain. As such, the framework is non-parametric 
but since the term non-parametric seems to be overloaded we refrain from using it to describe 
the framework. 
We begin by setting up the linear inverse problem and then we present a  one-dimensional 
toy problem that motivate the use of measures rather than probability densities. 
Then we turn to the problem of determining under which conditions a given set of sensor 
data can be generated by a source chosen from a given class of sources. As a warm-up 
we consider linear combinations of base source measures in  both the invertible case 
and the over determined case. The arguments are based on finding appropriate cones 
in the space of positive measures (describing the source) and in the space of sensor 
measurements.We then build on this to generalise the analysis to 
the case where the source is chosen from a closed cone of measures (we dispense 
of the assumption of having a finite number of base sources), Theorem 
\ref{thm:CompactCondition2}. While Theorem \ref{thm:CompactCondition2} 
is certainly interesting in its own right explaining when a measurement 
can be realised the analysis also allows for a derivation of the main result of 
the paper: conditions under which the least squares optimisation problem is 
well-defined, Theorem \ref{thm:LeastSquare}. In addition to these results 
we also characterise the set of measurements when a source is approximated 
by a sequence of instantaneous point sources, 
Theorem \ref{thm:ConvexConicalHull}.

The measure theoretic approach that is presented in this paper introduces a
machinery which we believe will be useful in future studies where rigorous
results on general linear inverse dispersion problem are sought. Indeed, while not
solving any particular inverse dispersion problem, the method is not
hampered by any peculiarities that a given set of parameters could have
introduced.

\section{Setting of the problem, the dispersion model and its adjoint}

The atmospheric dispersion problem that we are interested in can be
formulated in terms of a transition probability $p(t,x;t^*,x^*)$, where 
$(t^*,x^*),(t,x)\in T\times V$ where $T\subset \mathbb{R}$ is a time interval and 
$V\subset \mathbb{R}^{3}$ is a spatial domain. The transition probability
density expresses the probability for a particle released at the time-space
point $(t^*,x^*)$ to reside in the time-space point $(t,x)$ for $t\geq t^*$. We
note that $p=0$ when $t<t^*$. The particles whose dispersion is governed by
this transition probability is assumed to originate from a source $S$. The
source $S$ is assumed to be a positive measure on $T\times V$ (that is, the word 
"source" is used in the strict sense; no sinks are considered in this paper). In this way
the total mass\ $M$ released from the source is given by integrating the
source measure $S$ over its support%
\begin{equation}
M=\int_{T}\int_{V}dS(t^*,x^*).
\end{equation}%
The quantity that is usually desired as output from a dispersion model is
the concentration of the pollutant in a given space-time point. Since $S$
has its support on $T\times V$ and the transition probability describes the
dynamics of the released substance the concentration $c(t,x)$ is obtained by
weighing all released particles (released at some $(t^*,x^*)$ with $t^*<t$) with
the probability that they have been transported from $(t^*,x^*)$ to $(t,x)$%
\begin{equation}
c(t,x)=\int_{T}\int_{V}p(t,x;t^*,x^*)dS(t^*,x^*).
\end{equation}%
While $c(t,x)$ is the predicted concentration at the space time point $(t,x)$
the sensor may not have the resolution to make an ideal measurement from the
concentration field $c(t,x)$, indeed the sensor may perform some form of
averaging in both space and time to yield the sensor response $\overline{c}%
(t,x)$. We assume that the averaging process in the sensor can be described
by a probability measure $S^{\ast }$ (usually referred to as the
sensor-filter function) on $T\times V$, and hence we express the sensor
response as%
\begin{equation}
\overline{c}=\int_{T}\int_{V}c(t,x)dS^{\ast }(t,x).
\end{equation}%
According to Fubini's theorem (\emph{Rudin, Theorem 8.8 p. 164}), this can
be written as%
\[
\bar{c}=\int \int cdS^{\ast }=\int \int \int \int pd\left( S\times S^{\ast
}\right) =\int \int c^{\ast }dS
\]%
In case $S$ and $S^{\ast }$ are given by square--integrable spacetime
densities $dS\left( t,x\right) =s\left( t,x\right) dtdx$, $dS^{\ast
}=s^{\ast }\left( t,x\right) dtdx$, then $c$ and $c^{\ast }$ are also
square--integrable spacetime densities, and 
\begin{equation}
\bar{c}=\left( c,s^{\ast }\right) =\left( c^{\ast },s\right) 
\label{eqn:adjointfcns}
\end{equation}%
where the inner product is defined by $\left( f,g\right) =\int \int f\left(
t,x\right) g\left( t,x\right) dtdx$. Therefore, $c^{\ast }$ is called the 
\emph{adjoint concentration}. We want to allow sources and measurements with
singular parts. Let us consider combinations of square--integrable spacetime
densities and instantaneous point masses. To this end we generalize $c$ and $%
s$ to measures of the form

\[
dF\left( t^{\ast },x^{\ast }\right) =f\left( t^{\ast },x^{\ast }\right)
dt^{\ast }dx^{\ast }+\sum_{j}f_{j}\delta _{\left( t_{j}^{\ast },x_{j}^{\ast
}\right) }
\]%
which we call \emph{primal measures}, and we generalize $c^{\ast }$ and $%
s^{\ast }$ to measures of the form 
\[
dG^{\ast }\left( t,x\right) =g^{\ast }\left( t,x\right)
dtdx+\sum_{i}g_{i}^{\ast }\delta _{\left( t_{j},x_{j}\right) }
\]%
which we call \emph{dual measures}.  We would like to have a generalization
of (\ref{eqn:adjointfcns}) to  
\begin{equation}
\bar{c}=\left\langle C,S^{\ast }\right\rangle =\left\langle S,C^{\ast
}\right\rangle   \label{eqn:adjointmeasures}
\end{equation}%
for a suitable bilinear map $\left\langle \cdot ,\cdot \right\rangle $,
which implies that 
\[
\left\langle F,G^{\ast }\right\rangle =\left( f,g^{\ast }\right)
+\sum_{j}f\left( t_{j},x_{j}\right) g_{j}^{\ast }+\sum_{i}f_{i}g^{\ast
}\left( t_{i}^{\ast },x_{i}^{\ast }\right) 
\]%
This definition makes sense only if

\begin{itemize}
\item $f$ is continuous at $\left( t_{j},x_{j}\right) $ so the application
of $\delta _{\left( t_{j},x_{j}\right) }$ in $G^{\ast }$ is appropriate

\item $g^{\ast }$ is continuous at $\left( t_{i}^{\ast },x_{i}^{\ast
}\right) $ so the application of $\delta _{\left( t_{i}^{\ast },x_{i}^{\ast
}\right) }$ in $F$ is appropriate

\item The $\left( t_{j},x_{j}\right) $'s are disjoint from the $\left(
t_{i}^{\ast },x_{i}^{\ast }\right) $'s, because multiplication of point
masses with common support is not defined. In other words, we are not
allowed to make an instantaneous point measurements at the spacetime
location of an instantaneous point source.
\end{itemize}

In fact, in case $F=C$ and $G^{\ast }=S^{\ast }$ the two first conditions
implies the third, because of the connection between $S$ and $C$. Indeed,
since $S$ is a primal measure we have%
\begin{eqnarray*}
c\left( t,x\right) &=&\int \int p\left( t,x;t^{\ast },x^{\ast }\right)
s\left( t^{\ast },x^{\ast }\right) dt^{\ast }dx^{\ast }+\sum_{j}p\left(
t,x;t_{j}^{\ast },x_{j}^{\ast }\right) s_{j} \\
c_{i} &=&0
\end{eqnarray*}%
so $C$ has no singular part (point masses) and the density $c\left(
t,x\right) $ has singularities at $\left( t_{j}^{\ast },x_{j}^{\ast }\right) 
$ and hence cannot be continuous there. By a similar argument, $C^{\ast }$
cannot have any singular part.

To summarize, we have a generalization (\ref{eqn:adjointmeasures}) of the
adjoint to measures, with the inner product $\left( \cdot ,\cdot \right) $,
a bilinear form, is replaced by a bilinear mapping $\left\langle \cdot
,\cdot \right\rangle $ on the spaces of primal and adjoint measures. The
common part is the subspace of square--integrable densities, on which the
bilinear mapping $\left\langle fdxdt,g^{\ast }dxdt\right\rangle $ coincides
with the inner product $\left( f,g^{\ast }\right) $. Since the measures $C$
and $C^{\ast }$ do not have singular parts, we henceforth identify them with
their densities $c$ and $c^{\ast }$, and refer to $c^{\ast }$ as the adjoint
concentration, although it is not an adjoint in the usual Hilbert space
sense. Similar formulas apply for other combinations of singular measures
like continuous or instantaneous point, line, area or volume sources or
measurements, but is not elaborated on further here.

Let us now use the definition of $c(t,x)$ to rewrite this expression in the
following way%
\begin{eqnarray}
\overline{c} &=&\int_{T}\int_{V}c(t,x)dS^{\ast }(t,x)  \notag \\
&=&\int_{T}\int_{V}\int_{T}\int_{V}p(t,x;t^*,x^*)dS(t^*,x^*)dS^{\ast }(t,x) \\
&=&\int_{T}\int_{V}\left( \int_{T}\int_{V}p(t,x;t^*,x^*)dS^{\ast }(t,x)\right)
dS(t^*,x^*).  \notag
\end{eqnarray}%
By defining the \emph{adjoint concentration field }$c^{\ast }(t^*,x^*)$ as%
\begin{equation}
c^{\ast }(t^*,x^*)=\int_{T}\int_{V}p(t,x;t^*,x^*)dS^{\ast }(t,x)  \label{def_adjoint}
\end{equation}%
we get%
\begin{equation}
\overline{c}=\int_{T}\int_{V}c^{\ast }(t^*,x^*)dS(t^*,x^*).
\end{equation}%
Hence we have two equivalent ways of calculating the sensor response%
\begin{equation}
\overline{c}=\int_{T}\int_{V}c(t,x)dS^{\ast }(t,x)=\int_{T}\int_{V}c^{\ast
}(t^*,x^*)dS(t^*,x^*)
\end{equation}%
which is the dual relationship between the forward and the adjoint
description of the dispersion problem. We note that equation (\ref%
{def_adjoint}) describing the adjoint concentration field is evolving
backwards in time: we may view the transition probability as moving adjoint
particles released by $S^{\ast }$ backwards in time and space. The main
advantage of using the adjoint representation in inverse dispersion
modelling is computational efficiency. This is a well-documented fact, see
for example \cite{Marchuk1986}. We also remark that the adjoint
concentration field $c^{\ast }$ is independent of the source function $S$,
and the concentration field $c$ is independent of the sensor-filter function 
$S^{\ast }$.

\section{Source-receptor relationship}

The dispersion problem predicts how a pollutant from a source spreads in the
atmosphere. From an abstract point of view this problem can be seen as a
problem of mapping of measures: the source $S$ can be viewed as a measure in
the spatio-temporal domain $T\times V$ that is being mapped via the
dispersion equations into a scalar function $c$ (the concentration), from
which we make measurements represented by a probability measure $S^{\ast }$,
defining the averaging of the concentration function $c$. From this level of
abstraction the adjoint version of the problem is very similar. In this case
the adjoint equations maps a probability measure $S^{\ast }$ on $T\times V$
representing a measurement in a sensor to a scalar function $c^{\ast }$
(adjoint 'concentration') from which we can make "adjoint measurements"
using a source measure $S$ acting on the adjoint 'concentration' $c^{\ast }$%
. (Depending on the scaling of the problem the adjoint 'concentration' $%
c^{\ast }$ may not be a proper concentration dimensionally.) In view of this
light, asking questions about the sensor response in the forward problem or
asking questions about the source in the inverse problem are very similar.
Based on this observation we therefore propose to adopt a measure theoretic
approach and we develop a mathematical framework for studying the inverse
problem. While we are omitting the analysis of the forward problem in this
paper we note that treating this problem is completely analogous. Studying
the problem in this generality will not allow us to solve any particular
inverse dispersion problem, but it will allow us to draw general conclusions
about whole classes of problems. One particular advantage of this approach
hence lies in the fact that we avoid difficulties that may be associated
with a particular problem and its parameters - of course, these will have to
be addressed when the particular problem is to be solved.

\section{One--dimensional example to motivate the use of measure theory}

As a model example, consider a stationary one--dimensional diffusion on the
unit interval with absorbing boundary conditions%
\begin{eqnarray}
-c^{\prime \prime }\left( x\right) &=&S\left( x\right) \text{, }x\in \left[
0,1\right]  \label{eqn:1DproblemDE} \\
c\left( 0\right) &=&c\left( 1\right) =0  \label{eqn:1DProblemBC}
\end{eqnarray}%
The solution $c\left( x\right) $ is a concave function; using the integral
formula of Blaschke and Pick \cite{BlaschkePick1916} the solution can be
written%
\[
c\left( x\right) =\int_{0}^{1}\frac{y\left( 1-y\right) }{\sqrt{3}}\hat{%
\varphi}\left( x,y\right) S\left( y\right) dy 
\]%
where\footnote{%
The basis functions $\hat{\varphi}$ are normalized so that $\int_{0}^{1}\hat{%
\varphi}^{2}\left( x,y\right) dx=1$}%
\[
\hat{\varphi}\left( x,y\right) =\left\{ 
\begin{array}{ccc}
\sqrt{3}x/y & \text{if} & 0\leq x\leq y \\ 
\sqrt{3}\left( 1-x\right) /\left( 1-y\right) & \text{if} & y\leq x\leq 1%
\end{array}%
\right. 
\]%
This formula is also valid if $S$ is a unit point mass at a fixed point $y$,
in which case the concentration profile is%
\[
c\left( x\right) =\frac{y\left( 1-y\right) }{\sqrt{3}}\hat{\varphi}\left(
x,y\right) =\min \left( x\left( 1-y\right) ,y\left( 1-x\right) \right)
\equiv f\left( x,y\right) 
\]%
and in case of a point measurement at a fixed point $x$ we have $c^{\ast
}\left( y\right) =f\left( x,y\right) $. Given a finite number of measurement
points $x_{1},...,x_{m}$ a vector of measured values $\bar{c}=\left(
c_{1},...,c_{m}\right) $ is the result of a smooth density $S\left( y\right) 
$ if and only if the points $\left( 0,0\right) ,\left( x_{1},c_{1}\right)
,...,\left( x_{m},c_{m}\right) ,\left( 1,0\right) $ lie on the graph of the
smooth concave function $c\left( x\right) $ given by the formula above.
Likewise, $\bar{c}$ is the result of a point source $S$ at $y$ if and only
if the same points lie on the graph of a function $\lambda f\left( \cdot
,y\right) $ for some $\lambda >0$.

We want the set of measurement vectors $\bar{c}$ to be closed, so that we
can determine the closest measurement vector from any given vector. Taking a
sequence $c_{j}$ of smooth convex functions converging pointwise towards $%
f\left( \cdot ,x_{k}\right) $ for some $1<k<m$ we conclude that the vector $%
\bar{c}=\left( f\left( x_{1},x_{k}\right) ,...,f\left( x_{m},x_{k}\right)
\right) $ should be included. The points $\left( 0,0\right) ,\left( x_{1},%
\bar{c}_{1}\right) ,...,\left( x_{k},\bar{c}_{k}\right) $ are collinear, and
likewise $\left( x_{k},\bar{c}_{k}\right) ,...,\left( x_{m},\bar{c}%
_{m}\right) ,\left( 1,0\right) $, and the only concave function containing
these points in its graph is $f\left( \cdot ,x_{k}\right) $ so $\bar{c}$
must come from a point source at $x_{k}$. Hence point sources must be
allowed. Since any measure can be locally approximated (by weak convergence
of measures) by a sequence of finite linear combinations of point sources,
it is natural to allow sources given by finite measures.

\section{Linear combinations of sources}

The purpose of this section is to characterize all possible measurement
values obtainable when $S$ is a linear combination of a given finite number
of base sources. In other words, we \ will now investigate under which
condition there exists a measure $S$ which will produce the concentration
measurements exactly. Finding a source $S$ reproducing a value $\bar{c}$ for a
measurement $S^{\ast }$ is easy; simply take an arbitrary source that gives
a positive measured value and scale the source properly. Trying the same
idea for several measurements $S_{i}^{\ast }$, $i=1,...,m$, take sources $%
S_{j}$, $j=1,...,n$ and assume that $S=\sum_{j=1}^{n}\lambda _{j}S_{j}$ with 
$\lambda _{j}\geq 0$ (we only consider $\lambda _{j}\geq 0$ since we want
all $S_{j}$ to contribute as sources, were some $\lambda _{j}$ allowed to be
negative the corresponding "source" $S_{j}$ would act as a sink, even if $S$
could still be positive). Given the measured values $\bar{c}_{1},...,\bar{c}%
_{m}\geq 0$ we get the linear system of equations%
\begin{equation}
\sum_{j=1}^{n}a_{ij}\lambda _{j}=\bar{c}_{i}\text{ where }%
a_{ij}=\left\langle S_{j},c_{i}^{\ast }\right\rangle 
\end{equation}%
and we denote $A=\left( a_{ij}\right) $, which is sometimes called the \emph{%
source--receptor matrix}. Assume first that $A$ is invertible, i.e., $m=n$
and the measurement vectors $\left( \left\langle S_{j},c_{1}^{\ast
}\right\rangle ,...,\left\langle S_{j},c_{m}^{\ast }\right\rangle \right) $
(produced by the individual sources $S_{j}$, $j=1,...,n$) are linearly
independent. Then, since $A$ is an invertible nonnegative matrix (by
nonnegative matrix we mean a matrix where all elements are nonnegative), the
inverse $A^{-1}$ contains nonpositive elements on row $i$ if $A$ contains
off--diagonal positive elements in column $i$ (see the remark below for justification). 
Hence the condition that $%
\lambda _{i}\geq 0$ gives a linear constraint 
\begin{equation}
-\sum_{j\in J_{i}^{-}}\left( a^{-1}\right) _{ij}\bar{c}_{j}\leq \sum_{j\in
J_{i}^{+}}\left( a^{-1}\right) _{ij}\bar{c}_{j}
\end{equation}%
where $J_{i}^{+}$ denotes the set of column indices $j$ for which $\left(
a^{-1}\right) _{ij}>0$ and $J_{i}^{-}$ denotes the set of column indices $j$
for which $-(a^{-1})_{ij}>0$.

\begin{remark}
Suppose that $A\geq 0$. The row vectors $A_{i}^{-1}$ of $A^{-1}$ and the
column vectors $A_{j}$ of $A$ satisfy $A_{i}^{-1}\cdot A_{j}=\delta _{ij}$.
Suppose that $A_{i}$ contains $k$ positive components, e.g., $A_{i}=\alpha
_{1}e_{1}+...+\alpha _{k}e_{k}$ with $\alpha _{l}>0$, $l=1,...,k$. If $j\neq
i$ and $A_{j}^{-1}=\beta _{1}e_{1}+...+\beta _{n}e_{n}$ then either $\beta
_{1}=...=\beta _{k}=0$ or $\beta _{l}<0$ for some $1\leq l\leq k$. In the
former case we have $A_{j}^{-1}=\beta _{k+1}e_{k+1}+...+\beta _{n}e_{n}$.
There can be at most $n-k$ such $A_{j}^{-1}$'s since the $A_{j}^{-1}$'s are
linearly independent. Hence there are at least $k-1$ column vectors $%
A_{j}^{-1}$'s with $j\neq i$ that contain negative elements. Therefore, both
the positive and negative parts $\left( A^{-1}\right) _{ji}^{+}=\max \left(
0,\left( A^{-1}\right) _{ji}\right) $ and $\left( A^{-1}\right)
_{ji}^{-}=\max \left( 0,-\left( A^{-1}\right) _{ji}\right) $ are nonzero,
and the nonnegativity conditions $\lambda =A^{-1}\bar{c}\geq 0$ give the
linear constraints%
\[
\left( A^{-1}\right) ^{-}\bar{c}\leq \left( A^{-1}\right) ^{+}\bar{c}
\]
\end{remark}

The general case requires more work, but may be solved as a minimization problem, 
indeed the problem of finding the "best" nonnegative solution $x\in \mathbb{R}^{n}$%
, $x\geq 0$ to the linear system $Ax=b$, where $A\in \mathbb{R}^{m\times n}$
and $b\in \mathbb{R}^{m}$ are given, can be formulated as a constrained
quadratic minimization problem%
\begin{eqnarray*}
\min d\left( x,z\right) &=&\frac{1}{2}\left\Vert z^{2}\right\Vert \\
Ax-b-z &=&0 \\
x &\in &\mathbb{R}^{n}\text{, }z\in \mathbb{R}^{m}\text{, }x\geq 0
\end{eqnarray*}%
The Lagrangian for this problem is%
\[
\mathcal{L}\left( x,z,\mu ,\eta \right) =d\left( x,z\right) +\mu
^{T}\left( Ax-b-z\right) -\eta ^{T}x, 
\]%
where $\cdot^T$ denotes the transpose.
Necessary and sufficient conditions for optimal points $\left( x,z\right) $
are given by the Karush--Kuhn--Tucker conditions: $\left( \hat{x},\hat{z}
\right) $ is an optimal point if and only if there are $\hat{\mu}\in 
\mathbb{R}^{m}$, $\hat{\eta} \in \mathbb{R}^{n}$ such that%
\begin{eqnarray}
\nabla _{x}\mathcal{L} &\mathcal{=}&A^{T}\hat{\mu}-\hat{\eta}=0
\label{eqn:KKTstationarityX} \\
\nabla _{z}\mathcal{L} &\mathcal{=}&\hat{z}-\hat{\mu}=0
\label{eqn:KKTstationarityZ} \\
A\hat{x}-b-\hat{z} &=&0\text{ }  \label{eqn:KKTPrimalFeasibilityAx} \\
\hat{x} &\geq &0\text{ }  \label{eqn:KKTPrimalFeasibilityX} \\
\hat{\eta} &\geq &0\text{ }  \label{eqn:KKTDualFeasibilityLambda1} \\
\hat{\eta _{j}}\hat{x_{j}} &=&0\text{, }j=1,...,n\text{ }
\label{eqn:KKTComplementarityLambda}
\end{eqnarray}%
This system can be solved by the linear program%
\begin{eqnarray}
\min w &=&\Sigma _{i}u_{i}+\Sigma _{j}v_{j} \\
A^{T}\mu -\eta &=&u \\
Ax-b-\mu &=&v\text{ } \\
x,\eta ,u &\in &\mathbb{R}^{n}\text{, }z,\mu ,v\in \mathbb{R}^{m}\text{, }%
x,\eta ,u,v\geq 0
\end{eqnarray}%
using a modification of the simplex method, where the complementarity
conditions (\ref{eqn:KKTComplementarityLambda}) are enforced by a restricted
basis entry rule (\emph{suitable reference inserted here...}). The
Lagrangian dual objective function%
\[
q\left( \mu ,\eta \right) =\inf_{x\in \mathbb{R}^{n},z\in \mathbb{R}^{m}}%
\mathcal{L}\left( x,z,\mu ,\eta \right) 
\]%
is defined on%
\[
\mathcal{D=}\left\{ \left( \mu ,\eta \right) :\mu ^{T}A-\eta
=0\right\} \text{,} 
\]%
and for $\left( \mu ,\eta \right) \in \mathcal{D}$ the minimum occurs for 
$x\in \mathbb{R}^{n}$, $z=\mu $ which gives 
\[
q\left( \mu ,\eta \right) =\mathcal{L}\left( x,\mu ,\mu ,\eta \right)
=-\mu ^{T}b-\frac{1}{2}\left\Vert \mu \right\Vert ^{2}\text{.} 
\]%
The Lagrangian dual problem is%
\begin{eqnarray*}
\max q\left( \mu ,\eta \right) &=&-\mu ^{T}b-\frac{1}{2}\left\Vert \mu
\right\Vert ^{2} \\
\mu ^{T}A-\eta &=&0 \\
\mu &\in &\mathbb{R}^{m},\eta \in \mathbb{R}^{n},\eta \geq 0
\end{eqnarray*}%
Since $d\left( x,z\right) $ is convex we have strong duality, i.e., $\max
q\left( \mu ,\eta \right) =q( \hat{\mu},\hat{\lambda})
=d\left( \hat{x},\hat{z}\right) =\min d\left( x,z\right) $ which by the
KKT conditions gives the optimal value%
\[
-\hat{\mu} ^{T}b-\left\Vert \hat{\mu}\right\Vert ^{2}/2=q=d=\left\Vert
\hat{z}\right\Vert ^{2}/2 
\]
There are two mutually exclusive cases: either the optimal value is $0$ (in
which case $\hat{\mu}=\hat{z}=0$ and $Ax=b$ has a solution $x\geq 0$)
or the optimal value is $>0$ (in which case $\hat{\mu}=\hat{z}$, 
$\left\Vert \hat{z}\right\Vert ^{2}=-\hat{z}^{T}b>0$ and $Ax=b$ does not
have a solution $x\geq 0$).

Considering the directional derivative of $q\left( \mu ,\eta \right) $ at 
$\left( 0,\eta \right) $ in the feasible direction $\nu $ ($\nu ^{T}A\geq
0$) we see that%
\[
\left. \frac{d}{dt}q\left( t\nu ,\eta \right) \right\vert _{t=0}=-\nu
^{T}b 
\]%
so $\mu =0$ is optimal in the dual problem (i.e., $Ax=b$ has a solution $%
x\geq 0$) if and only if $-\nu ^{T}b\leq 0$ for all feasible $\nu $ ($\nu
^{T}A\geq 0$). This is the content of the famous \emph{Farkas' lemma}, 
see e.g. \cite{Franklin1980}, p. 56.

The simplex method and Farkas' lemma have an instructive geometrical 
interpretation: the column vectors of $A$ generates a polyhedral cone 
$\kappa_{A}=\left\{ b\in \mathbb{R}^{n}:\exists x\in \mathbb{R}_{+}^{m}\text{ and 
}Ax=b\right\}$. If the optimal value is 0 then $b$ belongs to the cone $\kappa_{A}$ 
and we can find $x$ such that $Ax=b$, while if the optimal value is $>0$ then $b$ 
lies outside the cone and the optimal solution $\hat{x}$ is the point on the boundary 
of the cone minimizing the "distance" $d$ between $Ax$ and $b$, see Figure \ref{fig_simplex}.

When the optimal value is zero it means that the measurements $\Bar{c}$ 
can be realised exactly by a linear combination of the given sources, and 
hence the cone represents all possible measurements obtainable by linearly 
combining the given base sources.

A linear combination of Dirac measures is particularly interesting since
these are extremal elements in the convex sense, and if $S=\delta _{t^*,x^*}$,
then $\left\langle S,c^{\ast }\right\rangle =c^{\ast }\left(t^*,x^*\right) $,
so the measurement values obtained from a linear combination of Dirac
measures consist of the polyhedral cone generated by the values of $c^{\ast
} $ at the support points of the Dirac measures.\ This is generalized to
arbitrary positive measures below, see Theorem \ref{thm:ConvexConicalHull}.

\begin{figure}[!ht]
\centering
\includegraphics[scale=0.45]{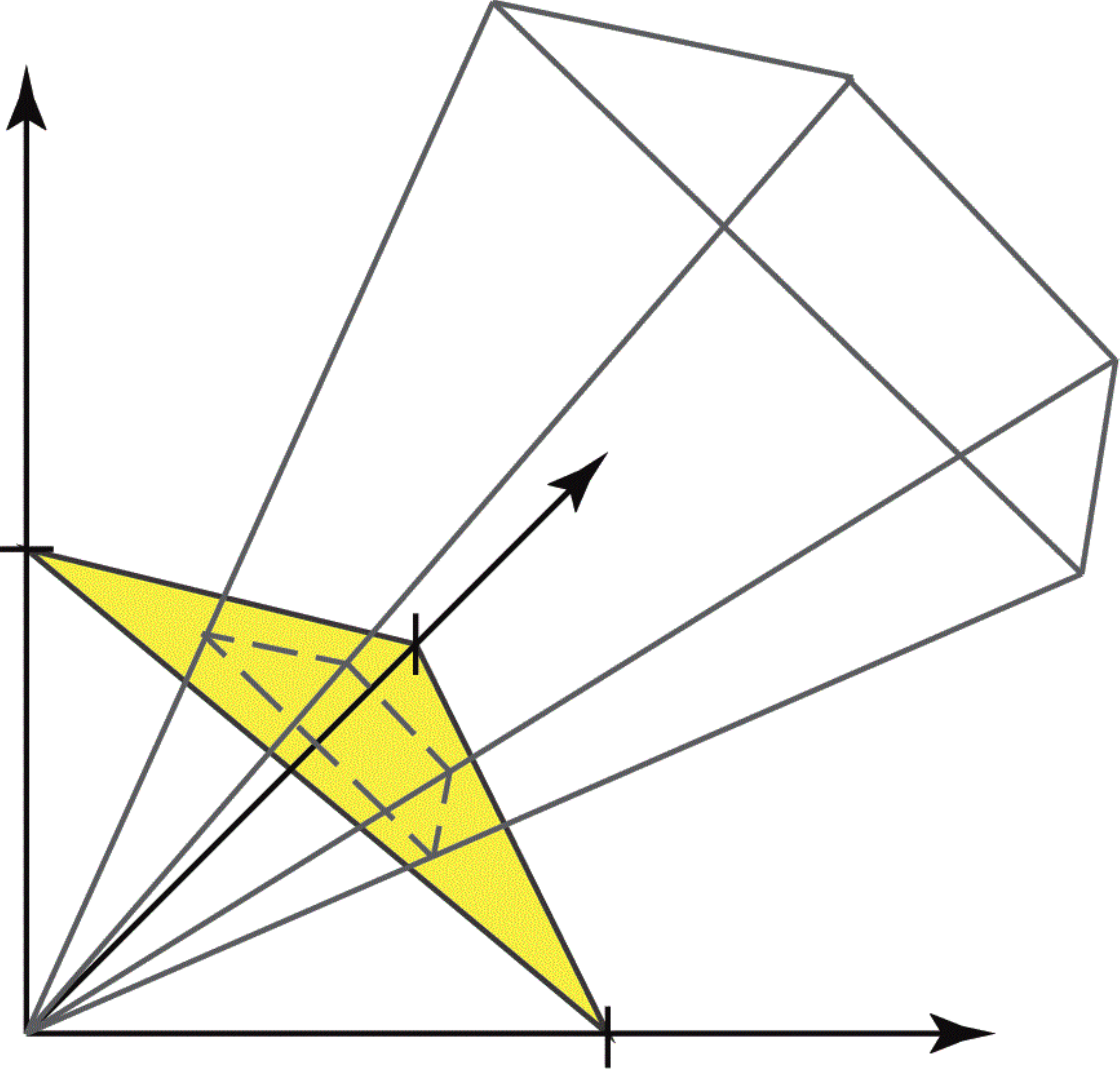}
\caption{The standard simplex in $\mathbb{R}^{3}$ and its intersection with
a cone in the positive octant.}
\label{fig_simplex}
\end{figure}

%

\section{Measurements of arbitrary sources}

The purpose of this section is to characterize all possible measurement
values obtainable when $S$ is picked from a more general closed cone of
positive measures. For this purpose, we define the measurement operator with
respect to the given adjoint function $c^{\ast }$:

\begin{definition}
\label{def:Tcstar}Given a set $\mathbb{S}$ of positive measures and
nonnegative continuous functions $c^{\ast }=c_{1}^{\ast },...,c_{m}^{\ast }$
on $T\times V$ we define 
\begin{equation}
H_{c^{\ast }}\left( S\right) =\left( \left\langle S,c_{1}^{\ast
}\right\rangle ,...,\left\langle S,c_{m}^{\ast }\right\rangle \right) \in 
\mathbb{R}_{+}^{m}
\end{equation}%
for all $S\in \mathbb{S}.$
\end{definition}

The results in the previous section shows that if $\mathbb{S}$ is a finite
positive cone (generated by the given sources $S_{1},...,S_{n}$) then the
image $H_{c^{\ast }}\left( \mathbb{S}\right) $ is a polyhedral cone in $%
\mathbb{R}_{+}^{m}$. In this section we drop the assumption on having 
a fintite number of base sources and investigate whether we still can 
draw similar conslusions about the measurement
values (i.e., the image of $H_{c^{\ast }}$).

We need to impose some structure (restrictions) on the set of source 
measures to perform the analysis, in particular we will make use of the 
notions of tightness and compactness.

\begin{definition}
\label{def:UniformTightness}A set of positive measures $\mathbb{S}$ on $%
T\times V$ is said to be \emph{uniformly} \emph{tight} if for each $%
\varepsilon >0$ there is a compact set $K_{\varepsilon }\in T\times V$ such
that $S\left( K_{\varepsilon }^{c}\right) <\varepsilon $ for all $S\in 
\mathbb{S}$, where the set $K_{\varepsilon }^{c}$ denotes the complement of $%
K_{\varepsilon }$ in $T\times V$.
\end{definition}

Loosly speaking the definition says that the mass\footnote{here the word mass 
refers to the value of the measure on the set, not physical mass} contained in the complement 
of the compact set $K_{\epsilon}$ can be made arbitrarily small, that is, nearly 
all mass is contained in the compact set  $K_{\epsilon}$, which intuitively means 
that the conceivable sources are not allowed to
release "too much mass too far away and too long ago". Measures can be
constructed with approximation methods, and to show that approximations
converges to the sought solution, we need appropriate compactness
properties, in this case the following:

\begin{definition}
\label{def:WeakRelativeCompactness}A set of positive measures $\mathbb{S}$
on $T\times V$ is said to be \emph{weakly relatively compact} if for any
sequence of measures $\left( S_{j}\right) _{j=1}^{\infty }$ in $\mathbb{S}$
there is a subsequence $j_{k}\rightarrow \infty $ when $k\rightarrow \infty $
such that $S_{j_{k}}$ is weakly convergent when $k\rightarrow \infty $,
i.e., there is a measure $S$ (not necessarily in $\mathbb{S}$, unless $%
\mathbb{S}$ is weakly closed) such that $\int fdS_{j_{k}}\rightarrow \int fdS
$ when $k\rightarrow \infty $, for all bounded continuous functions $f$.
\end{definition}

The notion of compactness and tighness are related:

\begin{theorem}
\label{thm:Prohorov}(Prohorov's theorem) A set of positive measures $\mathbb{%
S}$ is weakly relatively compact if and only if $\mathbb{S}$ is uniformly
tight and all $S\in \mathbb{S}$ have uniformly bounded total masses.
\end{theorem}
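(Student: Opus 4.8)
The plan is to prove the two implications separately, working throughout in the space $T\times V$, which, as a suitably chosen subset of $\mathbb{R}^{4}$, is Polish (complete, separable, metrizable); completeness will be the crucial ingredient for the forward implication and uniform tightness for the reverse one. For the reverse implication—uniform tightness together with a uniform bound $S(T\times V)\le M$ forces weak relative compactness—I would first regard each $S\in\mathbb{S}$ as a positive linear functional on $C_{0}(T\times V)$ of norm $S(T\times V)\le M$. Since $T\times V$ is locally compact and second countable, $C_{0}(T\times V)$ is separable, so by the Banach--Alaoglu theorem the closed ball of radius $M$ in $C_{0}(T\times V)^{*}$ is weak-$*$ sequentially compact; thus any sequence $(S_{j})$ in $\mathbb{S}$ has a subsequence $S_{j_{k}}$ converging \emph{vaguely}, $\int f\,dS_{j_{k}}\to\int f\,dS$ for all $f\in C_{0}(T\times V)$, and testing against nonnegative $f$ shows the limit $S$ is a positive measure.

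The substance of this direction is upgrading vague convergence to genuine weak convergence, i.e.\ passing from $C_{0}$ to $C_{b}$, and this is exactly where uniform tightness is spent. Given $\varepsilon>0$ I would pick a compact $K_{\varepsilon}$ with $S_{j_{k}}(K_{\varepsilon}^{c})<\varepsilon$ for all $k$ and a cutoff $\chi\in C_{c}(T\times V)$ with $0\le\chi\le 1$ and $\chi\equiv 1$ on $K_{\varepsilon}$. For any $f\in C_{b}$ the product $f\chi$ lies in $C_{0}$, so $\int f\chi\,dS_{j_{k}}\to\int f\chi\,dS$, whereas the tails satisfy $|\int f(1-\chi)\,dS_{j_{k}}|\le\|f\|_{\infty}S_{j_{k}}(K_{\varepsilon}^{c})<\|f\|_{\infty}\varepsilon$ uniformly in $k$; a comparison of nested cutoffs yields the matching bound $\int(1-\chi)\,dS\le\varepsilon$ for the limit. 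Letting $\varepsilon\to 0$ gives $\int f\,dS_{j_{k}}\to\int f\,dS$ for every $f\in C_{b}$, with $S(T\times V)\le M$ finite, which is the desired weak relative compactness.

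For the forward implication, assume $\mathbb{S}$ is weakly relatively compact. Uniform boundedness of masses is immediate, since a sequence with masses tending to infinity would, along a weakly convergent subsequence, force the finite value $S(T\times V)=\lim_{k}S_{j_{k}}(T\times V)$ by testing $f\equiv 1\in C_{b}$. For tightness I would fix a countable dense set with balls $B_{i}$ of radius $\delta$ and prove the \emph{uniform covering lemma}: for all $\delta,\varepsilon>0$ there is a finite union $G=\bigcup_{i\le N}B_{i}$ with $S(G^{c})<\varepsilon$ for \emph{every} $S\in\mathbb{S}$. If this failed one could choose $S_{n}$ with $S_{n}\bigl((\bigcup_{i\le n}B_{i})^{c}\bigr)\ge\varepsilon$; along a weakly convergent subsequence $S_{n_{j}}\to S$, the portmanteau inequality $S(G_{m})\le\liminf_{j}S_{n_{j}}(G_{m})$ for the open increasing sets $G_{m}=\bigcup_{i\le m}B_{i}$, combined with $S_{n_{j}}(G_{m})\le S_{n_{j}}(T\times V)-\varepsilon$ once $n_{j}\ge m$, gives $S(G_{m})\le S(T\times V)-\varepsilon$ for all $m$, and $G_{m}\uparrow T\times V$ then forces $S(T\times V)\le S(T\times V)-\varepsilon$, a contradiction. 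Applying the lemma with $\delta=1/k$, $\varepsilon=\varepsilon_{0}2^{-k}$ produces finite unions $G_{k}$ of $(1/k)$-balls with $S(G_{k}^{c})<\varepsilon_{0}2^{-k}$ uniformly; the set $K=\bigcap_{k}\overline{G_{k}}$ is closed and totally bounded, hence compact by completeness, and $S(K^{c})\le\sum_{k}S(G_{k}^{c})<\varepsilon_{0}$ for all $S$, which is uniform tightness.

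The step I expect to be the main obstacle is this forward direction, specifically extracting the \emph{uniform} covering lemma—a statement quantified over all of $\mathbb{S}$—from the merely sequential compactness hypothesis, and then turning total boundedness of $K=\bigcap_{k}\overline{G_{k}}$ into actual compactness. The latter is the one point where completeness of $T\times V$ is indispensable, and together these constitute the genuine content of Prohorov's theorem; the reverse direction, by contrast, is essentially Banach--Alaoglu packaged with a tightness argument to control escaping mass.
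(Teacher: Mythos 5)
The paper does not actually prove this theorem: its ``proof'' is a pointer to Daley and Vere--Jones, pp.~394--396. Your proposal is therefore necessarily a different route---a genuine, self-contained proof---and it is correct, following the classical pattern. For sufficiency, Banach--Alaoglu plus Riesz representation on $C_0(T\times V)$ yields a vaguely convergent subsequence, and uniform tightness is spent exactly where it should be, upgrading vague to weak convergence; your ``nested cutoffs'' bound $\int(1-\chi)\,dS\le\varepsilon$ is legitimate (test against $g\in C_c$ with $0\le g\le 1-\chi$ and use inner regularity of the Radon limit). For necessity, boundedness via $f\equiv 1$ and the covering lemma by contradiction through the portmanteau inequality, followed by $K=\bigcap_k\overline{G_k}$ (closed, totally bounded, hence compact), is the standard Billingsley-style argument, and the portmanteau step is valid here because weak convergence of positive finite measures includes convergence of total masses. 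The one thing you should state explicitly is the regularity assumed of $T\times V$: your sufficiency direction needs local compactness (so that $C_0$ and the Riesz representation exist), and your necessity direction needs a \emph{complete} separable metric---if $T\times V$ is, say, a bounded open domain, the Euclidean metric is not complete, and you must pass to an equivalent Polish metric before forming the balls $B_i$ and invoking ``closed and totally bounded implies compact''; you flag this but do not carry it out. In comparison, the paper's citation buys the general Polish-space statement with no hypotheses beyond those stated, while your proof buys transparency: it shows precisely where tightness, boundedness, and completeness enter, at the price of the (harmless, in this setting) local-compactness assumption.
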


\begin{proof}
See \cite{DaleyVere-Jones2003}, p. 394--396.
\end{proof}

\begin{example}
The tightness condition is necessary. Consider a sequence of Dirac measures $%
S_{j}$ at discrete spacetime points $\left( t_{j},x_{j}\right) $ converging
to infinity. Then the $S_{j}$'s are not tight since any compact set is
eventually avoided by $\left( t_{j},x_{j}\right) $, but all $S_{j}$'s have
mass $1$ and hence uniformly bounded. There can be no weakly convergent
subsequence $S_{j_{k}}$, since that would mean that $f\left(
t_{j_{k}},x_{j_{k}}\right) $ is convergent for all continuous functions $f$.
\end{example}

We are now in the position to show that if we consider source measures 
in the weak closure of a set of measures that are uniformly tight and has 
uniformly bounded total masses then the
attainable measurement values $H_{c^{\ast }}\left( \overline{\mathbb{S%
}}\right)$ constitutes a closed and bounded set.

\begin{lemma}
\label{thm:CompactCondition}If $\mathbb{S}$ is uniformly tight and has
uniformly bounded total masses, then $H_{c^{\ast }}\left( \overline{\mathbb{S%
}}\right) $ is a compact subset of $\mathbb{R}_{+}^{m}$, where $\overline{%
\mathbb{S}}$ denotes closure of $\mathbb{S}$ with respect to weak
convergence of measures.
\end{lemma}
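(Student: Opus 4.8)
The plan is to combine Prohorov's theorem (Theorem \ref{thm:Prohorov}) with the weak continuity of the coordinate functionals $S\mapsto\langle S,c_i^{\ast}\rangle$, so that $H_{c^{\ast}}(\overline{\mathbb{S}})$ appears as a continuous image of a weakly compact set and is therefore compact. Since the weak topology on finite measures over the (Polish) domain $T\times V$ is metrizable, I would argue throughout with sequences, reading ``compact'' as ``sequentially compact'' without loss.

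First I would check that the weak closure $\overline{\mathbb{S}}$ again satisfies the hypotheses of Prohorov's theorem, i.e.\ that it is uniformly tight with uniformly bounded total masses. For the mass bound, if $M=\sup_{S\in\mathbb{S}}S(T\times V)$ and $S_n\to S$ weakly with $S_n\in\mathbb{S}$, then testing against the bounded continuous function $f\equiv 1$ gives $S(T\times V)=\lim_n S_n(T\times V)\le M$. For tightness, fix $\varepsilon>0$ and take the compact set $K$ from Definition \ref{def:UniformTightness} with $S_n(K^{c})<\varepsilon$ for all $n$; since $K^{c}$ is open, the portmanteau inequality $S(K^{c})\le\liminf_n S_n(K^{c})\le\varepsilon$ transfers tightness to the limit $S$. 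Hence $\overline{\mathbb{S}}$ is uniformly tight with uniformly bounded masses, so by Theorem \ref{thm:Prohorov} it is weakly relatively compact; being weakly closed, it is in fact weakly (sequentially) compact.

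Next I would establish weak continuity of each coordinate and conclude. Given a sequence $(y_j)$ in $H_{c^{\ast}}(\overline{\mathbb{S}})$, write $y_j=H_{c^{\ast}}(S_j)$ with $S_j\in\overline{\mathbb{S}}$, extract a weakly convergent subsequence $S_{j_k}\to S\in\overline{\mathbb{S}}$, and show $\langle S_{j_k},c_i^{\ast}\rangle\to\langle S,c_i^{\ast}\rangle$ for each $i$, so that $y_{j_k}\to H_{c^{\ast}}(S)\in H_{c^{\ast}}(\overline{\mathbb{S}})$. Nonnegativity of $S$ and of the $c_i^{\ast}$ then places the image in $\mathbb{R}_{+}^{m}$, and a continuous image of a sequentially compact set is compact, which gives the claim.

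The hard part will be the convergence $\int c_i^{\ast}\,dS_{j_k}\to\int c_i^{\ast}\,dS$. Weak convergence only guarantees $\int f\,dS_{j_k}\to\int f\,dS$ for \emph{bounded} continuous $f$, so if $c_i^{\ast}$ is merely continuous and unbounded the functional need not be weakly continuous: taking $S_j=(1-\tfrac1j)\delta_0+\tfrac1j\delta_j$ on $\mathbb{R}$ (uniformly tight, all of mass $1$, converging weakly to $\delta_0$) against $c^{\ast}(x)=x^2$ gives $\int c^{\ast}\,dS_j=j\to\infty$, so $H_{c^{\ast}}(\overline{\mathbb{S}})$ would even fail to be bounded. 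The natural and physically justified remedy is that the adjoint concentrations are in fact bounded --- indeed they decay at infinity, since $p(t,x;\cdot,\cdot)\to 0$ far from the support of $S^{\ast}$ --- whence $c_i^{\ast}\in C_b(T\times V)$ and the convergence is immediate from the definition of weak convergence. If one wishes to allow unbounded $c_i^{\ast}$, plain tightness must be upgraded to uniform integrability of the weights, $\sup_{S\in\mathbb{S}}\int_{K_\varepsilon^{c}}c_i^{\ast}\,dS\to 0$ as $\varepsilon\to 0$; splitting $\int c_i^{\ast}\,dS_{j_k}$ over $K_\varepsilon$ (where $c_i^{\ast}$ is bounded and weak convergence applies) and over $K_\varepsilon^{c}$ (controlled by uniform integrability) then recovers the limit.
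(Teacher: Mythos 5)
Your proof is correct and follows essentially the same route as the paper's: Prohorov's theorem to extract a weakly convergent subsequence, weak continuity of $S\mapsto\langle S,c_{i}^{\ast}\rangle$ to pass to the limit, and boundedness of the image from the uniform mass bound together with boundedness of $c^{\ast}$. Your extra checks --- that $\overline{\mathbb{S}}$ inherits tightness and the mass bound so Prohorov applies to sequences drawn from the closure itself, and that the $c_{i}^{\ast}$ must be bounded (an assumption the paper's proof uses implicitly, stating it only later in Definition \ref{def:LeastSquare}) --- are refinements of the same argument rather than a different method.
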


\begin{proof}
Assume that $y_{j}\in H_{c^{\ast }}\left( \mathbb{S}\right) $, i.e., there
are measures $S_{j}$ such that $y_{j}=H_{c^{\ast }}\left( S_{j}\right) $,
and assume that $y_{j}\rightarrow y$ when $j\rightarrow \infty $. By
Prohorov's theorem, there is a subsequence $j_{k}\rightarrow \infty $ when $%
k\rightarrow \infty $ and a measure $S\in \overline{\mathbb{S}}$ such that $%
S_{j_{k}}\rightarrow S$ weakly when $k\rightarrow \infty $, which implies
that $H_{c^{\ast }}\left( S_{j_{k}}\right) \rightarrow H_{c^{\ast }}\left(
S\right) $ when $k\rightarrow \infty $. Hence $H_{c^{\ast }}\left( S\right)
=y$, so $H_{c^{\ast }}\left( \overline{\mathbb{S}}\right) $ is closed.
Moreover, $H_{c^{\ast }}\left( \overline{\mathbb{S}}\right) $ is bounded
since $c^{\ast }$ is bounded and $\overline{\mathbb{S}}$ has uniformly
bounded total masses.
\end{proof}

\begin{example}
(Single instantaneous point sources) Let $D\subset T\times V$ be an open
subset of the spacetime domain, and let $\mathbb{S}$ be the set of
instantaneous point sources in $D$ with mass $M>0$. Then $\overline{\mathbb{S%
}}$ is the set of instantaneous point sources in $\overline{D}$ (the closure
of $D$ in $T\times V$) with mass $M$, and $H_{c^{\ast }}\left( \overline{%
\mathbb{S}}\right) =\left\{ Mc^{\ast }\left( t,x\right) :\left( t,x\right)
\in \overline{D}\right\} $. Hence the attainable measurement values for $%
\mathbb{S}$ is a surface in $\mathbb{R}_{+}^{m}$ parametrized over the
four--dimensional domain $\overline{D}$.
\end{example}

In Lemma \ref{thm:CompactCondition} tightness and uniformly bounded 
total masses implies compactness of  $H_{c^{\ast }}\left( \overline{\mathbb{S%
}}\right)$, in order to sharpen this statement by replacing the implication by 
equivalence we introduce a particular kind of tightness adapted to $c^{\ast }$, 
indeed we consider compact sets constructed from level sets of  $c^{\ast }$.

\begin{definition}
\label{def:cstarTightness}A set of positive measures $\mathbb{S}$ is said to
be uniformly $c^{\ast }$--tight if for every $\varepsilon >0$ there are $%
\varepsilon _{1},...,\varepsilon _{m}>0$ and a compact set $K_{\varepsilon
}\in T\times V$ such that $K_{\varepsilon }\subset \cup _{j}\left\{
c_{j}^{\ast }\geq \varepsilon _{j}\right\} $ and $S\left( K_{\varepsilon
}^{c}\right) <\varepsilon $, where the set $K_{\varepsilon }^{c}$ is the
complement of $K_{\varepsilon }^{{}}$ in $T\times V$. 
\end{definition}

\begin{example}
If $\mathbb{S}$ consists of measures supported on $\cup _{j}\left\{
c_{j}^{\ast }\geq \varepsilon \right\} $ for some $\varepsilon >0$, then $%
\mathbb{S}$ is uniformly $c^{\ast }$--tight.
\end{example}

By imposing the stronger assumption (yet natural for the problem we are 
studying) of $c^{\ast }$--tightness we sharpen the result in Lemma 
\ref{thm:CompactCondition} by having implication in both directions.

\begin{theorem}
\label{thm:CompactCondition2}Assume that $\mathbb{S}$ is uniformly $c^{\ast }
$--tight. Then $\overline{\mathbb{S}}$ has uniformly bounded total masses if
and only if $H_{c^{\ast }}\left( \overline{\mathbb{S}}\right) $ is a compact
subset of $\mathbb{R}_{+}^{m}$.
\end{theorem}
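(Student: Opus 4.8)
The plan is to prove the two implications separately, starting from the observation that uniform $c^{\ast}$-tightness is strictly stronger than the uniform tightness of Definition~\ref{def:UniformTightness}: the compact set $K_\varepsilon$ furnished by Definition~\ref{def:cstarTightness} already satisfies $S(K_\varepsilon^c) < \varepsilon$ for all $S \in \mathbb{S}$, so forgetting the extra containment $K_\varepsilon \subset \bigcup_j \{c_j^{\ast} \geq \varepsilon_j\}$ exhibits $\mathbb{S}$ as uniformly tight. The forward implication is then immediate: if $\overline{\mathbb{S}}$ has uniformly bounded total masses, so does $\mathbb{S} \subseteq \overline{\mathbb{S}}$, and since $\mathbb{S}$ is uniformly tight, Lemma~\ref{thm:CompactCondition} applies directly and gives that $H_{c^{\ast}}(\overline{\mathbb{S}})$ is compact. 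This direction carries no content beyond the lemma.

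The substance of the theorem is the reverse implication, and this is precisely where $c^{\ast}$-tightness, as opposed to plain tightness, is needed. Assuming $H_{c^{\ast}}(\overline{\mathbb{S}})$ is compact, hence bounded, I would fix a constant $B$ with $\langle S, c_j^{\ast} \rangle \leq B$ for every $j$ and every $S \in \overline{\mathbb{S}}$. The idea is to convert each integral bound into a mass bound by a Markov (Chebyshev) estimate on the level sets appearing in Definition~\ref{def:cstarTightness}. Fixing $\varepsilon > 0$ together with the associated $\varepsilon_1, \dots, \varepsilon_m$ and $K_\varepsilon$, nonnegativity of $c_j^{\ast}$ gives
\[
\varepsilon_j\, S\!\left(\{c_j^{\ast} \geq \varepsilon_j\}\right) \leq \int_{\{c_j^{\ast} \geq \varepsilon_j\}} c_j^{\ast}\, dS \leq \langle S, c_j^{\ast} \rangle \leq B,
\]
so $S(\{c_j^{\ast} \geq \varepsilon_j\}) \leq B/\varepsilon_j$. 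Since $K_\varepsilon \subset \bigcup_j \{c_j^{\ast} \geq \varepsilon_j\}$ this yields $S(K_\varepsilon) \leq \sum_j B/\varepsilon_j$, and adding $S(K_\varepsilon^c) < \varepsilon$ produces the uniform bound $S(T \times V) \leq \sum_{j=1}^m B/\varepsilon_j + \varepsilon$ valid for every $S \in \mathbb{S}$.

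Finally I would promote this bound from $\mathbb{S}$ to its weak closure $\overline{\mathbb{S}}$. The only delicate point is to ensure that no mass escapes to infinity in passing to a weak limit; this is automatic here because the weak convergence of Definition~\ref{def:WeakRelativeCompactness} tests against all bounded continuous functions, in particular the constant $1$, so total mass is continuous along weakly convergent sequences and the bound transfers verbatim to every $S \in \overline{\mathbb{S}}$. I expect this closure step to be the only subtlety; the Markov estimate itself is routine once one recognizes that $c^{\ast}$-tightness was designed exactly so that the compact sets $K_\varepsilon$ lie inside level sets on which the functions $c_j^{\ast}$ are bounded below.
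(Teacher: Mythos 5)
Your proof is correct and takes essentially the same approach as the paper: the forward implication by noting that uniform $c^{\ast}$--tightness implies uniform tightness and invoking Lemma~\ref{thm:CompactCondition}, and the reverse implication by the same Markov estimate on the level sets $\left\{ c_{j}^{\ast }\geq \varepsilon _{j}\right\} $ supplied by Definition~\ref{def:cstarTightness}. If anything, your write-up is more careful than the paper's: you use the correct union bound $S\left( K_{\varepsilon }\right) \leq \sum_{j}B/\varepsilon _{j}$ (the paper's termwise step $\sum_{j}\varepsilon _{j}S\left( K_{\varepsilon }\right) \leq \sum_{j}\varepsilon _{j}S\left\{ c_{j}^{\ast }\geq \varepsilon _{j}\right\} $ is not justified, since $K_{\varepsilon }$ lies only in the union of the level sets, so its final constant should involve $\min_{j}\varepsilon _{j}$ rather than $\sum_{j}\varepsilon _{j}$), and you make explicit the transfer of the mass bound from $\mathbb{S}$ to $\overline{\mathbb{S}}$ via testing against the constant function $1$, a step the paper leaves implicit.
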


\begin{proof}
Clearly, if $\mathbb{S}$ has uniformly bounded total masses then $H_{c^{\ast
}}\left( \mathbb{S}\right) $ is bounded, since $c^{\ast }$is bounded and
continuous. If $\sup T_{c}^{\ast }\left( \mathbb{S}\right) =c$
(componentwise), then take $\varepsilon ,\varepsilon _{1},...,\varepsilon
_{m}>0$ and $K_{\varepsilon }$ such that $K_{\varepsilon }\subset \cup
_{j}\left\{ c_{j}^{\ast }\geq \varepsilon _{j}\right\} $ and $S\left(
K_{\varepsilon }^{c}\right) <\varepsilon $. Then for all $S\in \mathbb{S}$
we have $\sum_{j}\varepsilon _{j}S\left( K_{\varepsilon }\right) \leq
\sum_{j}\varepsilon _{j}S\left\{ c_{j}^{\ast }\geq \varepsilon _{j}\right\}
\leq \sum_{j}\int c_{j}^{\ast }dS\leq \sum_{j}c_{j}$ so the total mass of $S$
is $S\left( K_{\varepsilon }^{c}\right) +S\left( K_{\varepsilon }\right)
\leq \varepsilon +\left. \sum_{j}c_{j}\right/ \sum_{j}\varepsilon _{j}$.
\end{proof}

The next result is the main result of this section, not least from the 
point of view of applications. Any source $S$ can be approximated by 
a sequence of discrete sources $S_{j}$
(i.e., linear combination of instantaneous point sources), so it may not
come as a surprise that the set of measurements is related to the linear
combinations of values of $c^{\ast }$, which is the content of the following

\begin{theorem}
\label{thm:ConvexConicalHull}Assume that $K\subset T\times V$ is compact and
all $c_{j}^{\ast }\geq \varepsilon $ on $K$ for some $\varepsilon >0$, and
let $\mathbb{S}$ the set of all positive finite measures on $K$. Then $%
H_{c^{\ast }}\left( \mathbb{S}\right) $ is the closure of the convex conical
hull of $c^{\ast }\left( K\right) $.
\end{theorem}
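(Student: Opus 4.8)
The plan is to prove the two inclusions $H_{c^{\ast }}(\mathbb{S})\subseteq \overline{\mathrm{cc}}$ and $\overline{\mathrm{cc}}\subseteq H_{c^{\ast }}(\mathbb{S})$, where $\mathrm{cc}$ denotes the convex conical hull of $c^{\ast }(K)$ and $c^{\ast }:K\to \mathbb{R}_{+}^{m}$ is the continuous map $p\mapsto (c_{1}^{\ast }(p),\dots ,c_{m}^{\ast }(p))$. The single observation that drives everything is that a finitely supported (discrete) measure $S=\sum_{i}\lambda _{i}\delta _{p_{i}}$ with $\lambda _{i}\geq 0$ and $p_{i}\in K$ satisfies $H_{c^{\ast }}(S)=\sum_{i}\lambda _{i}c^{\ast }(p_{i})$, so that the discrete measures in $\mathbb{S}$ map exactly onto $\mathrm{cc}$. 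The problem thus reduces to (i) approximating an arbitrary $S\in \mathbb{S}$ by discrete measures at the level of the $m$ integrals, and (ii) showing $H_{c^{\ast }}(\mathbb{S})$ is closed.

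For the inclusion $H_{c^{\ast }}(\mathbb{S})\subseteq \overline{\mathrm{cc}}$, fix $S\in \mathbb{S}$ and $\delta >0$. Using that $K$ is compact and each $c_{j}^{\ast }$ is continuous, hence uniformly continuous on $K$, I would choose a finite Borel partition $K=\bigsqcup _{i}B_{i}$ with $\mathrm{diam}(B_{i})$ small enough that each $c_{j}^{\ast }$ oscillates by less than $\delta $ on every $B_{i}$, pick a point $p_{i}\in B_{i}$, and form the discrete measure $S_{\delta }=\sum_{i}S(B_{i})\,\delta _{p_{i}}$. A direct estimate gives $\bigl|\langle S,c_{j}^{\ast }\rangle -\langle S_{\delta },c_{j}^{\ast }\rangle \bigr|\leq \delta \,S(K)$ for each $j$, so $H_{c^{\ast }}(S_{\delta })\to H_{c^{\ast }}(S)$ as $\delta \to 0$. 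Since every $H_{c^{\ast }}(S_{\delta })\in \mathrm{cc}$, the limit lies in $\overline{\mathrm{cc}}$.

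For the reverse inclusion, the containment $\mathrm{cc}\subseteq H_{c^{\ast }}(\mathbb{S})$ is immediate from the driving observation (each generator $\sum_{i}\lambda _{i}c^{\ast }(p_{i})$ is realised by a discrete measure in $\mathbb{S}$), so it suffices to prove that $H_{c^{\ast }}(\mathbb{S})$ is closed; then $\overline{\mathrm{cc}}\subseteq \overline{H_{c^{\ast }}(\mathbb{S})}=H_{c^{\ast }}(\mathbb{S})$. This closedness is the main obstacle, because $\mathbb{S}$ is the set of \emph{all} finite measures on $K$ and so has unbounded total masses, which places us outside the hypotheses of Lemma \ref{thm:CompactCondition}. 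The key device is the lower bound $c_{j}^{\ast }\geq \varepsilon $: for any $S$ with $H_{c^{\ast }}(S)=y$ we have $y_{j}=\int_{K}c_{j}^{\ast }\,dS\geq \varepsilon \,S(K)$, so that a bound on $y$ forces a bound on the total mass $S(K)$.

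Concretely, given $y_{n}=H_{c^{\ast }}(S_{n})\to y$, the sequence $(y_{n})$ is bounded, whence the masses satisfy $S_{n}(K)\leq y_{n,1}/\varepsilon \leq R$ for some $R$. The family $\{S_{n}\}$ is then uniformly tight (all measures live on the compact set $K$) with uniformly bounded total masses, so Prohorov's theorem (Theorem \ref{thm:Prohorov}) yields a weakly convergent subsequence $S_{n_{k}}\to S$; a standard Urysohn/Portmanteau argument shows the weak limit is again supported on $K$, hence $S\in \mathbb{S}$. As in the proof of Lemma \ref{thm:CompactCondition}, weak convergence together with the functions $c_{j}^{\ast }$ (continuous and bounded on $K$) gives $H_{c^{\ast }}(S_{n_{k}})\to H_{c^{\ast }}(S)$, so $y=H_{c^{\ast }}(S)\in H_{c^{\ast }}(\mathbb{S})$. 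This establishes closedness and completes both inclusions. The only technical care needed is that weak convergence is stated for globally bounded continuous test functions, which is handled by replacing each $c_{j}^{\ast }$ by a bounded continuous truncation agreeing with it on $K$ (legitimate since all the measures are concentrated on $K$).
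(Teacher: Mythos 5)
Your proof is correct and follows essentially the same route as the paper's: both inclusions rest on the observation that discrete measures on $K$ map exactly onto the convex conical hull of $c^{\ast }\left( K\right) $, and on the combination of the lower bound $c_{j}^{\ast }\geq \varepsilon $ (to force uniform mass bounds), compact support (tightness), and Prohorov's theorem to pass to a weak limit in $\mathbb{S}$. The differences are only presentational: you make the discrete approximation explicit via a finite partition of $K$ where the paper invokes weak density of discrete measures, and you organize the reverse inclusion as ``$\operatorname{cone}$ inclusion plus closedness of $H_{c^{\ast }}\left( \mathbb{S}\right) $'' rather than arguing directly on points of the closure, which also lets you tidy up the boundedness-of-test-functions issue that the paper leaves implicit.
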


\begin{proof}
$\mathbb{S}$ is a weakly closed set since $K$ is compact. Moreover, every $%
S\in \mathbb{S}$ is the weak limit of a sequence of discrete $S_{j}$
supported in $K$, i.e., $S_{j}=\sum_{k=1}^{N_{j}}c_{jk}\delta _{jk}$ where $%
\delta _{jk}$ are Dirac measures supported at suitable spacetime points $%
\left( t_{jk},x_{jk}\right) \in K$, and $c_{jk}>0$ and $%
\sum_{k=1}^{N_{j}}c_{jk}=\int dS$ for $k=1,...,N_{j}$ and $\ j=1,2,...$.
Also, $H_{c^{\ast }}\left( S_{j}\right) _{i}=\left\langle S_{j},c_{i}^{\ast
}\right\rangle =\sum_{k=1}^{N_{j}}c_{jk}c_{i}^{\ast }\left(
t_{jk},x_{jk}\right) $, so $H_{c^{\ast }}\left( S_{j}\right) $ is in the
conical hull of $c^{\ast }\left( K\right) $, and $H_{c^{\ast }}\left(
S_{j}\right) \rightarrow H_{c^{\ast }}\left( S\right) $ when $j\rightarrow
\infty $. This proves that $H_{c^{\ast }}\left( \mathbb{S}\right) $ is
included in the closure of the convex conical hull. Conversely, given a
point $y$ in the closure of the conical hull, there is a sequence $S_{j}$ of
discrete measures of the above form such that $H_{c^{\ast }}\left(
S_{j}\right) \rightarrow y$. Since all $c_{j}^{\ast }\geq \varepsilon $ on $K
$, the masses of the $S_{j}$'s must be uniformly bounded, and since they are
supported on the compact set $K$, they form a tight set of measures. By
Prohorov's theorem there is a subsequence $j_{k}\rightarrow \infty $ when $%
k\rightarrow \infty $ and a measure $S\in \mathbb{S}$ such that $%
S_{j_{k}}\rightarrow S$ weakly, and hence $H_{c^{\ast }}\left(
S_{j_{k}}\right) \rightarrow H_{c^{\ast }}\left( S\right) $ when $%
k\rightarrow \infty $. Hence $y=H_{c^{\ast }}\left( S\right) $, so $y\in
H_{c^{\ast }}\left( \mathbb{S}\right) $, which proves that the closure of
the convex conical hull is included in $H_{c^{\ast }}\left( \mathbb{S}%
\right) $.
\end{proof}

\section{Cones of measures}
In this section, as a preamble to the next section on the least squares solution, we give 
a technical lemma on the  closedness of cones generated by closed
bounded sets of measures.

For reach the desired result we have to introduce an additional condition on the 
generating set, namely a lower bound on the mass of $S$.

\begin{definition}
A set $\mathbb{S}$ of positive measures is said to have uniformly positive
total masses if there is a constant $M>0$ such that the total mass of $S$ is 
$\geq M$ for all $S\in \mathbb{S}$.
\end{definition}

\begin{lemma}
\label{thm:ClosedCone}Assume that $\mathbb{S}$ is a set of positive measures 
on $T\times V$, and let $\mathcal{%
C=}\operatorname{cone}\left( \mathbb{S}\right) $, the positive cone
generated by $\mathbb{S}$. Then $\operatorname{cone}\left( \overline{\mathbb{S}}%
\right) \subseteq $ $\overline{\mathcal{C}}$. Moreover,if $\mathbb{S}$ have
uniformly positive total masses, then $\operatorname{cone}\left( \overline{\mathbb{S}%
}\right) =\overline{\mathcal{C}}$.
\end{lemma}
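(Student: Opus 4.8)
The plan is to establish the two assertions separately. The inclusion $\operatorname{cone}(\overline{\mathbb{S}}) \subseteq \overline{\mathcal{C}}$ is a soft consequence of the weak continuity of scalar multiplication and requires no hypothesis on the masses, whereas the equality hinges on controlling the scaling factors through the total masses.

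For the first inclusion I would take $\lambda S \in \operatorname{cone}(\overline{\mathbb{S}})$ with $\lambda \ge 0$ and $S \in \overline{\mathbb{S}}$, choose (by Definition \ref{def:WeakRelativeCompactness}) a sequence $S_j \in \mathbb{S}$ with $S_j \to S$ weakly, and observe that $\int f\, d(\lambda S_j) = \lambda \int f\, dS_j \to \lambda \int f\, dS$ for every bounded continuous $f$, so $\lambda S_j \to \lambda S$ weakly. Since each $\lambda S_j$ lies in $\mathcal{C}$, the limit $\lambda S$ lies in $\overline{\mathcal{C}}$. Combined with this, the remaining task for the equality is the reverse inclusion $\overline{\mathcal{C}} \subseteq \operatorname{cone}(\overline{\mathbb{S}})$.

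For that reverse inclusion I would work, in accordance with this section's framing in terms of closed \emph{bounded} sets of measures, with a set $\mathbb{S}$ whose total masses are not only uniformly positive but also uniformly bounded, say $M \le S(T\times V) \le M'$ for all $S \in \mathbb{S}$. Let $\mu \in \overline{\mathcal{C}}$; the case $\mu = 0$ is immediate, so assume $\mu \ne 0$ and choose $\mu_j = \lambda_j S_j \to \mu$ weakly with $\lambda_j \ge 0$ and $S_j \in \mathbb{S}$. The decisive point is that the constant function $1$ is bounded and continuous, so weak convergence forces the total masses to converge: writing $m_j := \mu_j(T\times V)$ and $c_j := S_j(T\times V)$ gives $m_j = \lambda_j c_j \to m := \mu(T\times V) > 0$. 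Normalising, $\nu_j := \mu_j/m_j = S_j/c_j$ are probability measures with $\nu_j \to \nu := \mu/m$ weakly. Since $M \le c_j \le M'$, I would pass to a subsequence with $c_{j_k} \to c \in [M,M']$, where $c > 0$ precisely because of the uniformly positive masses. Then $S_{j_k} = c_{j_k}\nu_{j_k} \to c\nu$ weakly by the same scaling continuity as before, so $S := c\nu \in \overline{\mathbb{S}}$, and $\mu = m\nu = (m/c)\,S$ exhibits $\mu$ as an element of $\operatorname{cone}(\overline{\mathbb{S}})$.

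The hard part is exactly the control of the scalars $\lambda_j$, equivalently of the masses $c_j$, in the passage to the limit. An upper bound on the masses is indispensable: without it the masses of the $S_j$ may escape to infinity while $\lambda_j \to 0$, and the normalised limit $\nu$ then fails to lift back into $\overline{\mathbb{S}}$. Concretely, the set $\mathbb{S}=\{\,j\,\delta_{x_j}:j\ge 1\,\}$ with distinct $x_j \to x_\infty$ has uniformly positive total masses, yet $\delta_{x_\infty}\in\overline{\mathcal{C}}\setminus\operatorname{cone}(\overline{\mathbb{S}})$, so the equality genuinely needs the boundedness built into the notion of a bounded set of measures. The two mass bounds then play complementary roles: the upper bound keeps $c=\lim c_{j_k}$ finite (preventing escape to infinity), while the uniformly positive lower bound keeps $c$ away from $0$, so that the limiting scale $\lambda = m/c$ is a well-defined nonnegative number.
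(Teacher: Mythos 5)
Your proof is correct, and for the first inclusion it is identical to the paper's (weak continuity of scalar multiplication). For the equality, your core mechanism is also the paper's: use the mass bounds to gain compactness of the scaling factors and pass to a convergent subsequence. The genuine difference is that you strengthened the hypothesis to include a uniform \emph{upper} bound on the total masses, and here you have caught a real defect in the paper. The paper's proof extracts $\lambda_{j_k}\rightarrow\lambda$ (which the lower bound does give, since $\lambda_{j}M\leq\lambda_{j}S_{j}(T\times V)\rightarrow\mu(T\times V)$) and then asserts $S_{j_k}\rightarrow\mu/\lambda$ weakly, silently assuming $\lambda>0$; with only the lower mass bound nothing excludes $\lambda=0$ while $\mu\neq 0$, because the masses of the $S_{j_k}$ may escape to infinity. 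Your counterexample is valid and shows that the lemma as literally stated is false: for $\mathbb{S}=\{j\delta_{x_j}\}$ with distinct $x_j\rightarrow x_\infty$, $x_j\neq x_\infty$, one has $\overline{\mathbb{S}}=\mathbb{S}$ (weak convergence of a sequence in $\mathbb{S}$ forces its integer total masses to converge, hence to be eventually constant), while $\delta_{x_\infty}=\lim_{j}\frac{1}{j}\left(j\delta_{x_j}\right)$ lies in $\overline{\mathcal{C}}$; thus $\operatorname{cone}\left(\overline{\mathbb{S}}\right)\subsetneq\overline{\mathcal{C}}$ even though all masses are $\geq 1$. Once the upper bound is added, your normalization through probability measures and the paper's direct subsequence extraction are interchangeable: in the paper's route one gets $\mu(T\times V)\leq\lambda M'$, hence $\lambda\geq\mu(T\times V)/M'>0$ when $\mu\neq 0$, and the division becomes legitimate. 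Your amendment also costs nothing downstream: the section's preamble already speaks of closed \emph{bounded} sets of measures, and the place where the equality is invoked, Theorem \ref{thm:LeastSquare}, assumes uniformly bounded as well as uniformly positive total masses, so the corrected statement is exactly what is needed there.
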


\begin{proof}
The first statement follows from the fact that if $S_{j}\in \mathbb{S}$ and $%
S_{j}\rightarrow S$ weakly, then $\lambda S_{j}\rightarrow \lambda S$ for
all $\lambda \geq 0$. To prove the second statement, assume that $\mu \in 
\overline{\mathcal{C}}$, and take $\lambda _{j}S_{j}\in \mathcal{C}$ with $%
\lambda _{j}\geq 0$, $S_{j}\in \mathbb{S}$ and $\lambda _{j}S_{j}\rightarrow
\mu $ weakly. Since the $S_{j}$'s have uniformly bounded masses from below,
the $\lambda _{j}$'s are uniformly bounded, and hence there is a subsequence 
$j_{k}\rightarrow \infty $ such that $\lambda _{j_{k}}\rightarrow \lambda $
when $k\rightarrow \infty $. Hence $S_{j}\rightarrow \mu /\lambda $ weakly,
so $\mu /\lambda \in \overline{\mathbb{S}}$, i.e., $\mu \in \operatorname{cone}%
\left( \overline{\mathbb{S}}\right) $.
\end{proof}

The following example shows that the lower bound on the masses in $\mathbb{S}
$ is necessary for the second statement.

\begin{example}
Let $\mathbb{S}\mathcal{=}\left\{ S_{x}=x\delta _{x},x\in \left( 0,1\right)
\right\} $, a subset of all positive measures on $\mathbb{R}$. Then $%
\mathcal{C=}\left\{ \lambda x\delta _{x},x\in \left( 0,1\right) \text{ and }%
\lambda \geq 0\right\} $. Consider $\mu _{n}=nS_{1/n}=\delta _{1/n}\in 
\mathcal{C}$. Then $\mu _{n}\rightarrow \delta _{0}$ weakly so $\delta
_{0}\in \overline{\mathcal{C}}$. Suppose that $\delta _{0}\in \operatorname{cone}%
\left( \overline{\mathbb{S}}\right) $. Then $\lambda \delta _{0}\in 
\overline{\mathbb{S}}$ for some $\lambda >0$, so there is a sequence $%
x_{j}\downarrow 0$ such that $x_{j}\delta _{x_{j}}\rightarrow \lambda \delta
_{0}$ weakly. Hence $x_{j}f\left( x_{j}\right) \rightarrow \lambda f\left(
0\right) $ for all continuous functions, which is a contradiction since we
can have $f\left( 0\right) \neq 0$. We conclude that $\delta _{0}\notin 
\operatorname{cone}\left( \overline{\mathbb{S}}\right) $.
\end{example}

\section{Least squares solutions to inverse problems}

In addition to characterising the set of measurements, Theorem 
\ref{thm:ClosedCone} enables us to determine when the least squares 
inverse problem is well-defined (Theorem \ref{thm:LeastSquare} below). We 
begin by defining the least squares solution to the inverse problem.

\begin{definition}
\label{def:LeastSquare}Given adjoint plumes $c^{\ast }=\left( c_{1}^{\ast
},...,c_{m}^{\ast }\right) $ on $T\times V$, assumed continuous and bounded,
and given measurement values, $\bar{c}=\left( \bar{c}_{1},...,\bar{c}%
_{m}\right) $, and given a weakly closed cone $\mathcal{C}$ of positive
measures on $T\times V$, a least square solution to the inverse problem in $%
\mathcal{C}$ is a measure $\bar{S}\in \mathcal{C}$ such that%
\begin{equation}
\label{eqn:LeastSquares}
\left\Vert \bar{c}-H_{c^{\ast }}\left( \bar{S}\right) \right\Vert
=\min_{S\in \mathcal{C}}\left\Vert \bar{c}-H_{c^{\ast }}\left( S\right)
\right\Vert 
\end{equation}%
where $\left\Vert \cdot \right\Vert $ denotes the Euclidean norm in $\mathbb{%
R}^{m}$.
\end{definition}

Collecting the results from the previous sections we are now in a position to 
show when the least squares inverse problem is well defined. We assume 
that $\mathbb{S}$ has uniformly positive total masses and let $\mathcal{C=%
}\operatorname{cone}\left( \overline{\mathbb{S}}\right) $ where $\mathbb{S}$, 
then by Lemma \ref{thm:ClosedCone} it follows that $\mathcal{C}$ is weakly 
closed. We furthermore assume that $\mathbb{S}$ is tight and has uniformly 
bounded total masses which by Lemma \ref{thm:CompactCondition} implies 
that the image of the cone $\kappa=H_{c^{\ast }}\left( \mathcal{C}\right)$ 
is a closed positive cone in $\mathbb{R}_{+}^{m}$. We now prefer to 
conduct the analysis of the least squares inverse problem on the generating 
set alone and we therefore need the following lemma justifying that it 
suffices to solve a minimization problem on the generating set.

For a single ray, we have an analytical formula for the closest point,
namely, $\pi _{x}\left( z\right) =\left( x\cdot z\right) z/\left\Vert
z\right\Vert ^{2}$, the closest point from $x$ on the ray $\left\{ \lambda
z:\lambda >0\right\} $. Therefore we can minimize over a generating set
rather than over the full cone:

\begin{lemma}
\label{lem:ConicalMin}Assume that $\kappa $ is a positive cone in $\mathbb{R}_{+}^{m}$ generated
by a set $B\subset \mathbb{R}_{+}^{m}\setminus \left\{ 0\right\} $, and
assume that $x\in \mathbb{R}_{+}^{m}\setminus \kappa $, $x\neq 0$. If 
\begin{equation}
y\in \kappa \text{ and }\left\Vert y-x\right\Vert =\min_{w\in \kappa
}\left\Vert w-x\right\Vert   \label{eqn:ymin}
\end{equation}%
then there is a $z\in B$ such that $y=\pi _{x}\left( z\right) $, and 
\begin{equation}
\left\Vert \pi _{x}\left( z\right) -x\right\Vert =\min_{w\in B}\left\Vert
\pi _{x}\left( w\right) -x\right\Vert \text{.}  \label{eqn:zmin}
\end{equation}%
Conversely, if $z\in B$ satisfies (\ref{eqn:zmin}) then $y=\pi _{x}\left(
z\right) =\left\Vert \pi _{x}\left( z\right) \right\Vert z/\left\Vert
z\right\Vert $ satisfies (\ref{eqn:ymin}).
\end{lemma}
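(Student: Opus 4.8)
The plan is to reduce the minimization over the whole cone to a family of one-dimensional minimizations over the individual generating rays, and to exploit the fact that everything lives in the positive orthant. The key structural observation is that, as constructed just above the lemma, $\kappa$ is a \emph{union of rays}, $\kappa=\{\lambda w:\lambda\geq0,\ w\in B\}$, and not the convex conical hull of $B$; consequently $\mathrm{dist}(x,\kappa)=\inf_{w\in B}\mathrm{dist}(x,R_w)$ where $R_w=\{\lambda w:\lambda\geq0\}$. First I would compute this per-ray distance. Since both $x$ and $w$ lie in $\mathbb{R}_+^m$ we have $x\cdot w\geq0$, so the minimizer of $\lambda\mapsto\|\lambda w-x\|^2$, namely $\lambda=(x\cdot w)/\|w\|^2$, is nonnegative; hence the unconstrained line projection $\pi_x(w)=(x\cdot w)w/\|w\|^2$ already lies on the closed ray $R_w$ and is its (unique) closest point to $x$. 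I would record the identity $\|\pi_x(w)-x\|^2=\|x\|^2-(x\cdot w)^2/\|w\|^2\leq\|x\|^2$, which drives everything.

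For the forward direction, suppose $y\in\kappa$ realises $\min_{w\in\kappa}\|w-x\|$. Being in $\kappa$, the point $y$ lies on some generating ray $R_z$ with $z\in B$. Since $R_z\subseteq\kappa$, the minimum over $\kappa$ bounds the minimum over $R_z$ from below; but $y\in R_z$, so $y$ is itself the closest point of $R_z$ to $x$, which by the first paragraph equals $\pi_x(z)$. Thus $y=\pi_x(z)$. Finally, for every $w\in B$ the point $\pi_x(w)$ lies in $\kappa$, so $\|\pi_x(w)-x\|\geq\|y-x\|=\|\pi_x(z)-x\|$, showing that $z$ attains $\min_{w\in B}\|\pi_x(w)-x\|$, i.e.\ (\ref{eqn:zmin}).

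For the converse, let $z\in B$ satisfy (\ref{eqn:zmin}) and set $y=\pi_x(z)$. Any $v\in\kappa$ lies on a ray $R_w$, so $\|v-x\|\geq\|\pi_x(w)-x\|\geq\|\pi_x(z)-x\|=\|y-x\|$; since $y\in\kappa$ this is exactly (\ref{eqn:ymin}). The stated normalised form then follows from $\|\pi_x(z)\|=(x\cdot z)/\|z\|$ (using $x\cdot z\geq0$), whence $\|\pi_x(z)\|z/\|z\|=(x\cdot z)z/\|z\|^2=\pi_x(z)$.

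I do not expect a deep obstacle here; the one thing to get exactly right is the role of the positive-orthant hypothesis. It is precisely $x\cdot w\geq0$ that prevents the line projection $\pi_x(w)$ from landing on the wrong side of the origin, so that the ray's closest point is given cleanly by $\pi_x$ rather than by a truncation at $0$. The degenerate case also merits a sentence: if $x\cdot w=0$ for every $w\in B$, then each $\pi_x(w)=0$, the closest point of $\kappa$ to $x$ is the origin, and both (\ref{eqn:ymin}) and (\ref{eqn:zmin}) hold with $y=0$, so the statement survives. It is worth stressing that the conclusion ``$y$ comes from a single generator $z$'' relies on $\kappa$ being the union of the rays; for a convex conical hull a closest point could sit in the relative interior of a multi-generator face and the single-$z$ conclusion would fail.
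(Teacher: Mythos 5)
Your proof is correct and takes essentially the same route as the paper's: both reduce the minimization over the cone $\kappa$ to per-ray minimizations via the projection $\pi_{x}$ over the generating set $B$, using that every point of $\kappa$ lies on a ray $R_{z}$ with $z\in B$ and that each $\pi_{x}(w)$ lies back in $\kappa$. Your write-up merely makes explicit what the paper leaves implicit, namely that $x\cdot w\geq 0$ in the positive orthant guarantees $\pi_{x}(w)$ is genuinely the closest point of the closed ray (no truncation at the origin), plus the degenerate case $x\cdot w\equiv 0$.
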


\begin{proof}
Minimizing over rays we have 
\[
\min_{w\in \kappa }\left\Vert w-x\right\Vert =\min_{w\in B}\left\Vert \pi
_{x}\left( w\right) -x\right\Vert 
\]
if either of the min exists. Moreover, for any $y\in \kappa $ there are $%
z\in B$ and $\lambda >0$ such that $y=\lambda z$, and $\pi _{x}\left(
y\right) =\pi _{x}\left( z\right) $. Consequently, for such $y$ and $z$, if
either $\left\Vert y-x\right\Vert =\min_{w\in \kappa }\left\Vert
w-x\right\Vert $ or $\left\Vert \pi _{x}\left( z\right) -x\right\Vert
=\min_{w\in B}\left\Vert \pi _{x}\left( w\right) -x\right\Vert $ holds we
have 
\[
\left\Vert y-x\right\Vert =\min_{w\in \kappa }\left\Vert w-x\right\Vert
=\min_{w\in \kappa }\left\Vert \pi _{x}\left( w\right) -x\right\Vert
=\min_{w\in B}\left\Vert \pi _{x}\left( w\right) -x\right\Vert =\left\Vert
\pi _{x}\left( z\right) -x\right\Vert 
\]
\end{proof}

In view of the previous lemma we see why we insisted on introducing 
the assumption on uniform positive total masses: it is important that 
the generating set does not contain the origin. Now, finally, we have 
come to the point where we can state, and easily prove, the main theorem:

\begin{theorem}
\label{thm:LeastSquare} Assume that the set of measures $\mathbb{S}$ is uniformly tight,
and weakly closed, with uniformly bounded and uniformly positive total
masses. Let $\mathcal{C}$ be the positive cone generated by $\mathbb{S}$.
Then $\mathbb{S}$ is weakly compact, and  $\mathcal{C}$ is weakly closed.
Moreover, there is a solution $\bar{S}$ to the least squares inverse problem
(\ref{eqn:LeastSquares}) on $\mathcal{C}$, given by%
\begin{equation}
\bar{S}=\frac{\left\Vert \pi _{\bar{c}}\left( H_{c^{\ast }}\left( \hat{S}%
\right) \right) \right\Vert }{\left\Vert H_{c^{\ast }}\left( \hat{S}\right)
\right\Vert }\hat{S}  \label{scaledSolution}
\end{equation}%
where $\hat{S}$ is a solution to the following least squares problem on $%
\mathbb{S}$:%
\begin{equation}
\left\Vert \pi _{\bar{c}}\left( H_{c^{\ast }}\left( \hat{S}\right) \right) -%
\bar{c}\right\Vert =\min_{S\in \mathbb{S}}\left\Vert \pi _{\bar{c}}\left(
H_{c^{\ast }}\left( S\right) \right) -\bar{c}\right\Vert   \label{reducedLS}
\end{equation}
\end{theorem}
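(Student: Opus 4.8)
The plan is to assemble the conclusion from the lemmas already established, leaving only a short argument to verify. First I would establish the structural claims. The set $\mathbb{S}$ is uniformly tight and has uniformly bounded total masses, so by Prohorov's theorem (Theorem \ref{thm:Prohorov}) it is weakly relatively compact; since $\mathbb{S}$ is assumed weakly closed, it is in fact weakly compact. That $\mathcal{C}=\operatorname{cone}\left(\mathbb{S}\right)$ is weakly closed follows from Lemma \ref{thm:ClosedCone}: because $\mathbb{S}$ is weakly closed we have $\overline{\mathbb{S}}=\mathbb{S}$, and because $\mathbb{S}$ has uniformly positive total masses the lemma gives $\operatorname{cone}\left(\overline{\mathbb{S}}\right)=\overline{\mathcal{C}}$, hence $\mathcal{C}=\overline{\mathcal{C}}$ is weakly closed.

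Next I would reduce the least squares problem on $\mathcal{C}$ to a minimization over the generating set $\mathbb{S}$. Let $\kappa=H_{c^{\ast}}\left(\mathcal{C}\right)$ and let $B=H_{c^{\ast}}\left(\mathbb{S}\right)$, so that $\kappa$ is the positive cone in $\mathbb{R}_{+}^{m}$ generated by $B$ (using linearity of $H_{c^{\ast}}$ in $S$, which gives $H_{c^{\ast}}\left(\lambda S\right)=\lambda H_{c^{\ast}}\left(S\right)$). The minimization in \eqref{eqn:LeastSquares} over $S\in\mathcal{C}$ is exactly the minimization of $\left\Vert w-\bar{c}\right\Vert$ over $w\in\kappa$, so Lemma \ref{lem:ConicalMin} applies with $x=\bar{c}$. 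In the nontrivial case $\bar{c}\notin\kappa$, the lemma tells us that minimizing over the full cone is equivalent to minimizing $\left\Vert\pi_{\bar{c}}\left(w\right)-\bar{c}\right\Vert$ over $w\in B$, i.e. minimizing $\left\Vert\pi_{\bar{c}}\left(H_{c^{\ast}}\left(S\right)\right)-\bar{c}\right\Vert$ over $S\in\mathbb{S}$, which is precisely \eqref{reducedLS}; and the minimizing ray is recovered by the rescaling formula $y=\pi_{\bar{c}}\left(z\right)=\left\Vert\pi_{\bar{c}}\left(z\right)\right\Vert z/\left\Vert z\right\Vert$, which upon substituting $z=H_{c^{\ast}}\left(\hat{S}\right)$ and pulling the scalar back through $H_{c^{\ast}}$ yields the displayed form \eqref{scaledSolution} for $\bar{S}$.

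The remaining task, and the main obstacle, is to prove that the reduced problem \eqref{reducedLS} actually attains its minimum on $\mathbb{S}$, since Lemma \ref{lem:ConicalMin} presupposes the existence of a minimizer. Here I would invoke weak compactness of $\mathbb{S}$ together with continuity of the objective. The map $S\mapsto H_{c^{\ast}}\left(S\right)$ is weakly continuous because each $c_{j}^{\ast}$ is continuous and bounded, so $S\mapsto\left\langle S,c_{j}^{\ast}\right\rangle$ is continuous for weak convergence of measures. The projection $\pi_{\bar{c}}$ and the norm are continuous on $\mathbb{R}^{m}$ (one must only note that $\pi_{\bar{c}}$ is well-defined and continuous away from $0$, which holds here since uniformly positive total masses combined with $c_{j}^{\ast}\geq\varepsilon$ keep $H_{c^{\ast}}\left(S\right)$ bounded away from the origin). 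Hence $S\mapsto\left\Vert\pi_{\bar{c}}\left(H_{c^{\ast}}\left(S\right)\right)-\bar{c}\right\Vert$ is weakly continuous on the weakly compact set $\mathbb{S}$, and therefore attains its infimum at some $\hat{S}\in\mathbb{S}$. The degenerate case $\bar{c}\in\kappa$ is handled separately and trivially, since then $\bar{c}=H_{c^{\ast}}\left(\bar{S}\right)$ for some $\bar{S}\in\mathcal{C}$ achieves zero residual. Combining the attainment of $\hat{S}$ with the equivalence furnished by Lemma \ref{lem:ConicalMin} delivers the claimed solution $\bar{S}$ and completes the proof.
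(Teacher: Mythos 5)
Your proposal is correct and follows essentially the same route as the paper: Prohorov's theorem plus weak closedness gives weak compactness of $\mathbb{S}$, Lemma \ref{thm:ClosedCone} gives weak closedness of $\mathcal{C}$, a Weierstrass-type compactness argument gives existence of a minimizer of the reduced problem, and Lemma \ref{lem:ConicalMin} transfers it back to the full cone. The only cosmetic difference is that you minimize the weakly continuous composite objective directly over the weakly compact set $\mathbb{S}$, whereas the paper first observes that $B=H_{c^{\ast }}\left( \mathbb{S}\right) $ is compact in $\mathbb{R}_{+}^{m}$ (via Lemma \ref{thm:CompactCondition}) and minimizes over $B$; the two are equivalent.
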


\begin{proof}
The set $\mathbb{S}$ is weakly relatively compact by
Theorem 4 and hence weakly compact since it is assumed to be weakly closed.
The cone $\mathcal{C}$ generated by $\mathbb{S}$ is weakly closed by Theorem
12. The set $B=H_{c^{\ast }}\left( \mathbb{S}\right) $ is compact by Theorem
6, and $B\subset \mathbb{R}_{+}^{m}\setminus \left\{ 0\right\} $ since $%
\mathbb{S}$ has uniformly positive total masses. Let $\kappa =H_{c^{\ast
}}\left( \mathcal{C}\right) $. Then $\kappa $ is the positive cone generated
by $B$, and $\kappa $ is closed because the mapping $H_{c^{\ast }}$ is
continuous. Since $B$ is compact, there is a $z\in B$ such that $\left\Vert
\pi _{\bar{c}}\left( z\right) -\bar{c}\right\Vert =\min_{w\in B}\left\Vert
\pi _{\bar{c}}\left( w\right) -\bar{c}\right\Vert $. By the second statement
in Lemma 16, $y=\left\Vert \pi _{\bar{c}}\left( z\right) \right\Vert
z/\left\Vert z\right\Vert $ satisfies $\left\Vert y-\bar{c}\right\Vert
=\min_{w\in \kappa }\left\Vert w-\bar{c}\right\Vert $, and $y\in \kappa $
since $\kappa $ is closed. Finally, we take $\hat{S}\in \mathbb{S}$ such
that $H_{c^{\ast }}\left( \hat{S}\right) =z$; then $\hat{S}$ satisfies (\ref%
{reducedLS}) and $\bar{S}$ given by (\ref{scaledSolution}) has $H_{c^{\ast
}}\left( \bar{S}\right) =y$ and $\bar{S}$  is a solution to (\ref{eqn:LeastSquares}).
\end{proof}

Note that the solution is not necessarily unique, unless $\mathbb{S}$ is a
convex set of positive measures, in which case $\mathcal{C}$ is a closed
convex cone of positive measures and $\kappa=H_{c^{\ast }}\left( \mathcal{C}%
\right) $ is a closed convex cone in $\mathbb{R}_{+}^{m}$. Note also that it
suffices to find a minimizer in the generating set $\overline{\mathbb{S}}$,
and compute the scaling afterwards.

\begin{example}
Let $\mathbb{S}$ be the set of single instantaneous point sources in a
compact set $K\subset T\times V$. This is a uniformly tight, weakly closed
set of measures with uniformly bounded and uniformly positive total masses,
representing instantaneous point sources of unit mass. The positive cone $%
\mathcal{C}$ generated by $\mathbb{S}\mathcal{=}\overline{\mathbb{S}}$
represents all instantaneous point sources supported in $K$. Hence $%
B=H_{c^{\ast }}\left( \mathbb{S}\right) =c^{\ast }\left( K\right) $, the
image of $K$, is a basic set for the closed cone $\kappa=H_{c^{\ast }}\left( 
\mathcal{C}\right) $. Note that neither of the cones are convex;\ only
single instantaneous point sources, not linear combinations of different
ones, are included.
\end{example}

\begin{example}
Let $\mathbb{S}$ be the set of single continuous point sources with spatial
support in a compact set $K\subset V$ and unit total mass, i.e., 
\begin{equation}
S=q\left( t\right) dt\otimes \delta _{x^*}\left( dx\right) 
\end{equation}%
where $q$ is a nonnegative continuous function with $\int_{T}q\left(
t\right) dt=1$, and $x^*\in K$. Then the weak closure $\overline{\mathbb{S}}$
of $\mathbb{S}$ consists of all 
\begin{equation}
S=\mu \left( dt\right) \otimes \delta _{x^*}\left( dx\right) 
\end{equation}%
where $\mu $ is a probability measure on $T$. Note that $\overline{\mathbb{S}%
}$ includes temporally singular measures, for example discrete sums of
instantaneous point sources $S=\sum_{k}\lambda _{k}\delta _{t^*_{k}}\left(
dt\right) \otimes \delta _{x^*}\left( dx\right) $ with $\Sigma _{k}\lambda
_{k}=1$. This kind of singular measures must be included in order to obtain
a closed cone $H_{c^{\ast }}\left( \mathcal{C}\right) $, and thereby a
well--posed minimization problem.
\end{example}

\section{Conclusion}

We have presented a measure theoretic framework for studying the adjoint
dispersion problem. This framework and the accompanying measure theoretic
machinery enabled us to derive results for general linear inverse dispersion
problems without making prior assumptions on the number of sources, their
emission patterns and so on. Indeed, in our modus operandi the notion of
number of sources is not even a well-defined concept. We investigated when 
a given set of sensor
data can be realisable from a linear combination of source measures chosen
from some subset of all positive measures. Then we shifted the view from
working with a fixed set of measurement values, to asking (and answering)
the question: if the source is chosen from a closed cone of positive
measures, what are the possible measurement values that this source can
produce? Finally we used the framework to derive necessary and sufficient
conditions for the existence of a solution to the inverse least-squares
problem.

We conclude that the framework presented in this paper is a powerful tool
for stating and proving results on linear inverse atmospheric problems in their
full generality. The framework is not limited to proving the results that we
have presented here, indeed our next step is to use the framework to prove
rigorous results on the first order inverse method of Footprints, e.g. \cite%
{Robertson2004}, \cite{Pudykiewicz1998}. The framework is also easily
augmented to incorporate the forward dispersion problem as well. Our
preliminary investigations into uncertainty analysis of the forward
dispersion problem indicates that this is a fruitful approach.

\end{document}